\journalname{Quantum Information Processing}
\begin{document}

\title{Global multipartite entanglement dynamics in Grover's search algorithm}

\author{Minghua Pan \and Daowen Qiu \and Shenggen Zheng}
\institute{Minghua Pan \at
               School of Electronics and Information Technology, Sun Yat-sen University, Guangzhou 510006, China\\
               School of Information and Electronic Engineering, Wuzhou University, Wuzhou 543002, China\\
   \and Daowen Qiu \and Shenggen Zheng \at
   Institute of Computer Science Theory, School of Data and Computer Science, Sun Yat-sen University, Guangzhou 510006, China \\
  \email{issqdw@mail.sysu.edu.cn} (Corresponding author) \\
 }

\date{Received: date / Accepted: date}
% The correct dates will be entered by the editor

\maketitle

\begin{abstract}
Entanglement is considered to be one of the primary reasons for why quantum algorithms are more efficient than their classical counterparts for certain computational tasks.
The global multipartite entanglement of the multiqubit states in Grover's search algorithm can be quantified using the geometric measure of entanglement (GME). Rossi {\em et al.} (Phys. Rev. A \textbf{87}, 022331 (2013)) found that the entanglement dynamics is scale invariant for large $n$. Namely, the GME does not depend on the number $n$ of qubits; rather, it only depends on the ratio of iteration $k$ to the total iteration. In this paper, we discuss the optimization of the GME for large $n$. We prove that ``the GME is scale invariant'' does not always hold. We show that there is generally a turning point that can be computed in terms of the number of marked states and their Hamming weights during the curve of the GME. The GME is scale invariant prior to the turning point. However, the GME is not scale invariant after the turning point since it also depends on $n$ and the marked states. %We also present a necessary and sufficient condition when the GME is scale invariant for special cases.

\keywords{Entanglement Dynamics \and Quantum Search Algorithm \and Geometric Measure of Entanglement \and Scale Invariance}
% \PACS{PACS code1 \and PACS code2 \and more}
% \subclass{MSC code1 \and MSC code2 \and more}
\end{abstract}

\section{Introduction}
\label{introduction}
Entanglement is a unique feature of quantum theory, and it is considered to be one of the key resources in quantum computation and information \cite{book1,JMP43}. For quantum computation, it has been demonstrated that quantum algorithms are more efficient than classical algorithms for certain computational tasks, such as Shor's factoring algorithm \cite{Shor} and Grover's search algorithm \cite{Grover97}. Entanglement is believed to be the main reason for the efficiency in these quantum algorithms \cite{PRSLA459,PRL91}. It has been shown that entanglement is necessary to achieve an exponential speedup in Shor's factoring algorithm \cite{PRSLA459}. In the Deutsch-Jozsa \cite{PRSLA439}, Grover, and Simon algorithms \cite{Simon94}, it has been shown that multipartite entanglement is involved for most instances \cite{PRA83}. Because of its quadratic speedup over the best classical algorithm, the role of entanglement in Grover's search algorithm has attracted considerable interest \cite{PRL85,PRA65,PRA75,IJQI6,JMP43,PRA69,PLA345,PLA373,JPMT43,PRA87,13054454,Nat14,QIP152,QIP1511}.
Entanglement plays a critical role in Grover's dynamic search algorithm even though the initial state and the final state are separable \cite{PRA69,13054454}. Without entanglement, the quantum search algorithm would require an exponential overhead in terms of resources \cite{PRL85}. The correlations in Grover's algorithm were discussed in \cite{JPMT43} for one marked state, including concurrence as the entanglement measure. It was shown \cite{PLA373} that Grover's algorithm can be described as an iteration change of the bipartite entanglement in terms of concurrence. The multipartite
entanglement feature of the quantum states was investigated using the separable degree
\cite{Nat14}. It has been shown \cite{JMP439,PRA69,PLA345,PLA373,JPMT43,PRA87,13054454,QIP152} that the curve of the entanglement during Grover's search algorithm first increases and then decreases, which means that a turning point exists. From an algebraic geometry perspective, Holwech {\em et al.} \cite{QIP1511} investigated the entanglement nature of quantum states generated by Grover's search algorithm and explained the turning point of the curve when the marked states are $\ket{00...0}$ and Greenberger-Horne-Zeilinger (GHZ) states \cite{07120921}.

Geometric entanglement was originally defined as the Euclidean distance of a given multipartite state to the nearest fully separable state \cite{ANYA755}. The geometric measure of entanglement (GME) is a suitable entanglement measure when multipartite systems are taken into account \cite{PRA68}. The GME is a global quantifier of entanglement that includes bipartite and multipartite contributions \cite{PRA77}. Moreover, the GME is an interesting quantifier because it has connections with other measures \cite{PRA73} and can be efficiently estimated by quantitative entanglement witnesses that are amenable to experimental verification \cite{PRL98}.
According to the expression of the GME, Shantanav {\em et al.} \cite{13054454} numerically demonstrated how the GME changes with $n$ qubits (the database size was $N=2^n$) and $M$ marked states in Grover's algorithm.
Rossi {\em et al.} \cite{PRA87} presented explicit expressions of the GME in the asymptotic limit $2^n\gg1$ when $M=1$ and $M=2$ (GHZ state). These authors showed that the GME was independent of the number of qubits, $i.e.$, scale invariance for large $n$ in these two cases.

These previous works motivate us to consider how to compute the turning point and whether there is scale invariance of entanglement dynamics for general $M$ marked states for large $N$.
To this end, we discuss the optimization of the GME for $M$ marked states in a database with $N=2^n$ items when $N\gg M$. We find that the turning point of the GME dynamics curve can be computed in terms of the number $n$ of qubits and the marked states. Prior to the turning point, the GME is scale invariant. However, it is generally not scale invariant after the turning point, which may depend on $n$ and the marked states.

The remainder of this paper is organized as follows. In Section \ref{Sec2}, we briefly overview the related background and the previous works about entanglement dynamics in terms of the geometric measure of entanglement (GME). In Section \ref{Sec3}, we study the amount of entanglement dynamically evolving and provide the analytical asymptotic expression of the entanglement for symmetric states \cite{PRA80} by optimizing the GME.
In Section \ref{Sec4}, we present the entanglement dynamics for representative symmetric states to more clearly show our conclusions. Finally, we discuss our results and present a conclusion in Section \ref{Sec5}.

\section{Preliminaries}\label{Sec2}
\subsection{Geometric measure of entanglement in Grover's algorithm}
The original Grover's algorithm searches for a target ({\em i.e.}, the marked state) in an unordered database with $N$ entries using $O(\sqrt N)$ queries, which requires $\Omega(N)$ queries even for the best  classical randomized algorithm. In this paper, we consider a system with $N$ items and $M$ marked states, where $N=2^{n}$. To clarify the notation, we will simply overview Grover's search algorithm; for more details, see \cite{book1,Grover97}. Grover's search algorithm starts with an $n$-qubit uniform superposition state
\begin{eqnarray}\label{eqpsi0}
\ket{\psi_0}=\frac{1}{\sqrt{N}}\sum_{x\in\{0,1\}^n}\ket{x}.
\end{eqnarray}

Define $\ket{S_0}\equiv\frac{1}{\sqrt{N-M}}\sum_{x_n}\ket{x_n}$
as the superposition of all the states $\ket{x_n}$ that are not marked states, and
$\ket{S_1}\equiv\frac{1}{\sqrt{M}}\sum_{x_s}\ket{x_s}$
represents the superposition of all the states $\ket{x_s}$ that are marked states
({\em i.e.}, the solutions of search problem). It is easy to show that $\ket{\psi_0}$ can be rewritten as
\begin{eqnarray}\label{eqpsi}
\ket{\psi_0}&&=\sqrt{\frac{N-M}{N}}\ket{S_0}+\sqrt{\frac{M}{N}}\ket{S_1}
=\cos\theta\ket{S_0}+\sin\theta\ket{S_1},
\end{eqnarray}
where $\theta=\arcsin\sqrt{M/N}$. Then, the Grover operation $G$ (also referred to as the Grover iteration) will be applied to $\ket{\psi_0}$. After $k$ iterations of the Grover operation $G$, we obtain the state
\begin{eqnarray}
\ket{\psi_{k,M}} \equiv G^k\ket{\psi_0}= \cos\theta_k\ket{S_0}+\sin\theta_k\ket{S_1},
\end{eqnarray}
where $\theta_k=(2k+1)\theta$.

Operation $G$ is repeated until state $\ket{\psi_{k,M}}$ overlaps with $\ket{S_1}$ to the greatest extent possible. It is proven that the optimal total iteration is $k_{opt}=\mbox{CI}[{\frac{\pi}{2\theta}-1}/2]$, where $\mbox{CI}[x]$ denotes the closest integer to $x$. Ideally, the final state $\ket{\psi_{k_{opt}}}$ prior to measurement is $\ket{S_1}$, which is the superposition of all the marked states.
When $M\ll N$, we have $k_{opt}=\lfloor\frac{\pi}{4}\sqrt{N/M}\rfloor$.
Clearly, the optimal total iteration $k_{opt}$ is $O(\sqrt N)$, and $\theta_k=\frac{\pi}{2}k/k_{opt}$. It is clear that $\theta_k$ only depends on $k/k_{opt}$.

The geometric measure of entanglement (GME) \cite{ANYA755} of a state $\ket{\psi}$ is expressed as its distance from its nearest separable state $\ket{\phi}$. Namely, the overlap between
$\ket{\psi}$ and $\ket{\phi}$ is maximized. The entanglement of state $\ket{\psi}$ is
\begin{eqnarray}\label{defEn}
E(\ket{\psi})=1-\max_\phi|\braket{\psi|\phi}|^2,
\end{eqnarray}
where the maximum is over all separable states ($i.e.$, $\ket{\phi}=\bigotimes_{s=1}^n\ket{\phi_s}$, where the states $\ket{\phi_s}$ are single-qubit pure states).
The geometric measure remains unknown for most of the multipartite states \cite{PRA81} because the definition involves an optimization procedure over the class of separable states. However, this task can be drastically simplified in the case where states are symmetric since the nearest product state to any symmetric multipartite quantum state is necessarily symmetric \cite{PRA80}.

Fortunately, a large number of quantum states in experiments are symmetric under particle exchange, and this property allows us to significantly reduce the computational complexity.
For an $n$-particle system, the state $\ket{\tilde{m}}$ is defined as the following unnormalized symmetric state \cite{PRA67}:
\begin{eqnarray}\label{mexci}
\ket{\tilde{m}}\equiv\sum_i P_i(\ket{1}^{\otimes m}\ket{0}^{\otimes n-m}),
\end{eqnarray}
where $P_i$ is the set of all $n \choose m$ distinct permutations of $m$ $1$s and $n-m$ $0$s.

Therefore, we can use the GME to quantify the global multipartite entanglement of the state generated by Grover's search algorithm if the marked state $\ket{S_1}$ is symmetric according to the following fact.

\newtheorem{fact}{\textbf{Fact}}\label{Fa1}
\begin{fact}
For Grover's algorithm, let $\ket{S_1}$ be the superposition of all the marked states. If $\ket{S_1}$ is symmetric, then the state after $k$ iterations $\ket{\psi_{k,M}}$ is symmetric for $k\in \{0,1,...,k_{opt}\}$, where $M$ is the number of marked states and $k_{opt}$ is the optimal total iteration.
\end{fact}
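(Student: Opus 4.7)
The plan is to exploit that every Grover iterate $\ket{\psi_{k,M}}$ lies in $\mathrm{span}\{\ket{S_0},\ket{S_1}\}$, so symmetry of the iterates follows once both $\ket{S_0}$ and $\ket{S_1}$ are symmetric. The hypothesis gives symmetry of $\ket{S_1}$ for free, so the crux of the argument is to deduce symmetry of $\ket{S_0}$ from the same hypothesis.

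First I would observe that $\ket{\psi_0}=\tfrac{1}{\sqrt N}\sum_{x\in\{0,1\}^n}\ket{x}$ is permutation-invariant: any qubit permutation merely relabels the computational basis in a uniform superposition with equal amplitudes, leaving $\ket{\psi_0}$ unchanged. Next I would invert the decomposition $\ket{\psi_0}=\cos\theta\,\ket{S_0}+\sin\theta\,\ket{S_1}$, which is valid whenever $M<N$ (so $\cos\theta\neq 0$), to obtain $\ket{S_0}=(\cos\theta)^{-1}\bigl(\ket{\psi_0}-\sin\theta\,\ket{S_1}\bigr)$. Since the symmetric subspace of $(\mathbb{C}^{2})^{\otimes n}$ is a linear subspace, and both $\ket{\psi_0}$ and (by hypothesis) $\ket{S_1}$ lie inside it, so does $\ket{S_0}$. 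Finally, for each $k\in\{0,1,\dots,k_{opt}\}$ the representation $\ket{\psi_{k,M}}=\cos\theta_k\,\ket{S_0}+\sin\theta_k\,\ket{S_1}$ writes the iterate as a linear combination of two symmetric states, which yields the conclusion.

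There is no real obstacle to this argument; its only content is the elementary algebraic identity for $\ket{S_0}$ together with closure of the symmetric subspace under linear combinations. An equivalent route is induction on $k$ via $G=(2\ket{\psi_0}\bra{\psi_0}-I)(I-2\ket{S_1}\bra{S_1})$: any qubit permutation operator $P$ fixes both $\ket{\psi_0}$ and $\ket{S_1}$, hence commutes with each reflection and therefore with $G$, so $G$ preserves the symmetric subspace and every $\ket{\psi_{k,M}}=G^{k}\ket{\psi_0}$ remains symmetric. The direct two-dimensional-subspace argument given above is shorter and avoids the inductive bookkeeping.
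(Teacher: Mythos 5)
Your proposal is correct and follows essentially the same route as the paper: both deduce symmetry of $\ket{S_0}$ by inverting $\ket{\psi_0}=\cos\theta\,\ket{S_0}+\sin\theta\,\ket{S_1}$ and then conclude by linearity that each iterate $\cos\theta_k\ket{S_0}+\sin\theta_k\ket{S_1}$ is fixed by every qubit permutation. The only difference is presentational (you phrase it via closure of the symmetric subspace, the paper applies $P$ termwise), so no further comparison is needed.
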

\begin{proof}
Let $P$ be any particle exchange ($i.e.$, qubit permutation) operation. Clearly, the initial uniform superposition state is symmetric, that is,
$P\ket{\psi_0}=\ket{\psi_0}$.
Suppose that the marked state $\ket{S_1}$ is symmetric; we have
$P\ket{S_1}=\ket{S_1}$. According to Eq. (\ref{eqpsi}), we have $\ket{S_0}=\frac{1}{\cos\theta}\ket{\psi_0}-\tan\theta\ket{S_1}$. For any intermediate state $\ket{\psi_{k,M}}$, we have
\begin{eqnarray}
P\ket{\psi_{k,M}}&&=P(\cos\theta_k\ket{S_0}+\sin\theta_k\ket{S_1})\nonumber\\
&&=P(\frac{\cos\theta_k}{\cos\theta}\ket{\psi_0}+(\sin\theta_k-\cos\theta_k\tan\theta)\ket{S_1})\nonumber\\
&&=\frac{\cos\theta_k}{\cos\theta}P\ket{\psi_0}+(\sin\theta_k-\cos\theta_k\tan\theta)P\ket{S_1})\nonumber\\
&&=\frac{\cos\theta_k}{\cos\theta}\ket{\psi_0}+(\sin\theta_k-\cos\theta_k\tan\theta)\ket{S_1})\nonumber\\
&&=\ket{\psi_{k,M}}, \nonumber
\end{eqnarray}
where $k=0,1,...,k_{opt}$.
Therefore, $\ket{\psi_{k,M}}$ is symmetric.
\end{proof}

\subsection{Previous works about global GME in Grover's algorithm}
The optimization of the GME can be performed on the restricted set of symmetric separable states $\ket\eta...\ket\eta$, where $\ket\eta=\cos\frac{\alpha}{2}\ket{0}+e^{i\beta}\sin\frac{\alpha}{2}\ket{1}$.
The maximization involves only two parameters: $\alpha\in[0,\pi]$ and $\beta\in[0,2\pi]$.
Since $\theta_k\in[0,\frac{\pi}{2}]$, the coefficients of $\ket{\psi_{k,M}}$ are all positive. Therefore, the phase factor can be fixed to $\beta=0$.

For simplicity, we denote $|x|$ as the Hamming weight of a state $\ket{x}$, which is the number of $1$s in $x\in\{0,1\}^n$. Let $n_1,n_2,...,n_M $ be the Hamming weights of $M$ marked states.
The symmetric $n$-separable state can be expressed as
$\ket{\phi}=\ket{\eta}^{\bigotimes n}=\sum_{x\in \{0,1\}^n} \cos^{n-|x|}\frac{\alpha}{2}\sin^{|x|}\frac{\alpha}{2}\ket{x}$.
According to \cite{13054454}, the overlap of state $\ket{\psi_{k,M}}$ and the separate state $\ket{\phi}$ is
\begin{eqnarray}\label{epEn}
\braket{\psi_{k,M}|\phi}=&&\left(\cos\theta_k\bra{S_0}+\sin\theta_k\bra{S_1}\right)
\left(\sum_{x\in \{0,1\}^n}\cos^{n-|x|}\frac{\alpha}{2}\sin^{|x|}\frac{\alpha}{2}\ket{x}\right)\nonumber\\
=&&\frac{\cos\theta_k}{\sqrt{N-M}}\left[\left(\cos{\frac{\alpha}{2}}+\sin{\frac{\alpha}{2}}\right)^n
 -\sum_{i=1}^M\cos^{n-n_i}\frac{\alpha}{2}\sin^{n_i}\frac{\alpha}{2}\right]\nonumber\\
 &&+\frac{\sin\theta_k}{\sqrt{M}}\sum_{i=1}^M \cos^{n-n_i}\frac{\alpha}{2}\sin^{n_i}\frac{\alpha}{2}.
\end{eqnarray}
Therefore, the maximum of the overlap between state $\ket{\psi_{k,M}}$ and the $n$-separate state is
\begin{eqnarray}\label{rf1}
\max_\phi|\braket{\psi_{k,M}|\phi}|^2=
&&\max_\alpha\left|\frac{\cos\theta_k}{\sqrt{N-M}}\left[\left(\cos{\frac{\alpha}{2}}+\sin{\frac{\alpha}{2}}\right)^n
-\sum_{i=1}^M\cos^{n-n_i}\frac{\alpha}{2}\sin^{n_i}\frac{\alpha}{2}\right]\right.\nonumber\\
&&\left. +\frac{\sin\theta_k}{\sqrt{M}}\sum_{i=1}^M \cos^{n-n_i}\frac{\alpha}{2}\sin^{n_i}\frac{\alpha}{2}\right|^2.
\end{eqnarray}

It is difficult to perform optimization without knowledge of $n$ and the marked states ($n_i,M$).
Shantanav {\em et al.} \cite{13054454} numerically calculated the GME for various $n$ and $M$. These authors also discussed the cases where the marked states were GHZ states, Dicke states \cite{Dicke} and $W$ states \cite{PRA62}.
From another perspective, Rossi {\em et al.} \cite{PRA87} discussed the explicit expressions of the GME in large $N$ when the numbers of marked states are $M=1$ and $M=2$ (the marked states are $\ket{00...0}$ and $\ket{11...1}$), and they found ``scale invariance of entanglement dynamics''. Namely, the GME depends only on $\theta_k\simeq \frac{\pi}{2}k/k_{opt}$ and not on separate $k$ and $n$. In fact, the GMEs of these cases have two parts. The first parts are both asymptotic to $\sin^2\theta_k$. The second parts are asymptotic to
$\cos^2\theta_k$ (for $M=1$) and $\frac{1+\cos^2\theta_k}{2}$ (for $M=2$).
The turning points between the two parts are $k_{opt}/2$ and $0.61 k_{opt}$ for these two cases. From an algebraic geometry perspective, Holwech {\em et al.} \cite{QIP1511}  geometrically explained the fact that the turning point corresponds to $k_{opt}/2$ (for $M=1$) and $2k_{opt}/3$ (for GHZ states).

\section{Entanglement dynamics in the asymptotic limit}\label{Sec3}
We now consider that the algorithm searches in a large database with $M$ marked states such that $N\gg M$. Since $|\sum_{i=1}^M( \cos^{n-n_i}\frac{\alpha}{2}\sin^{n_i}\frac{\alpha}{2})|\leq M$, we have
\begin{eqnarray}
\max_\alpha\left|\frac{\cos\theta_k}{\sqrt{N-M}}\sum_{i=1}^M \cos^{n-n_i}\frac{\alpha}{2}\sin^{n_i}\frac{\alpha}{2}\right|^2\simeq 0. \nonumber
\end{eqnarray}

Therefore, we can obtain
\begin{eqnarray}\label{rf2}
\max_\phi|\braket{\psi_{k,M}|\phi}|^2\simeq
&&\max_\alpha\left|\frac{\cos\theta_k}{\sqrt{N-M}}\left(\cos{\frac{\alpha}{2}}+\sin{\frac{\alpha}{2}}\right)^n\right.\nonumber\\
&&\left.+\frac{\sin\theta_k}{\sqrt{M}}\sum_{i=1}^M \cos^{n-n_i}\frac{\alpha}{2}\sin^{n_i}\frac{\alpha}{2}\right|^2.
\end{eqnarray}

For simplicity, define $\Lambda\equiv\braket{\psi_{k,M}|\phi}$.
The GME can be expressed as $E(\ket{\psi_{k,M}})=1-\max|\Lambda|^2$.
It is clear that $(\cos{\frac{\alpha}{2}} + \sin{\frac{\alpha}{2}})^{n}=
\sqrt{2^n}\sin^n({\frac{\pi}{4}+\frac{\alpha}{2}})$.
Denote $A(\alpha)=\sin^n({\frac{\pi}{4}+\frac{\alpha}{2}})$ and $B(\alpha)=\frac{1}{\sqrt M}\sum_{i=1}^{M}\cos^{n-n_{i}}\frac{\alpha}{2}\sin^{n_{i}}\frac{\alpha}{2}$. We have
\begin{eqnarray}\label{Lambda}
\Lambda=\braket{\psi_{k,M}|\phi}\simeq \cos\theta_{k}A(\alpha)+\sin\theta_{k}B(\alpha)
\end{eqnarray}
and Eq. (\ref{rf2}) can be rewritten as
\begin{eqnarray}\label{AB}
\max|\Lambda|^2\simeq \max_\alpha\left| \cos\theta_{k}A(\alpha)
+\sin\theta_{k}B(\alpha)\right|^2.
\end{eqnarray}

For symmetric states, the GMEs are the same when the Hamming weights are $|x|$ and $|n-x|$. Hence, we can consider only the cases where $n_i$ is from $0$ to $\frac{n}{2}$.
The optimization is still difficult to perform because $B(\alpha)$ cannot be analytically expressed without fixing $n$, $n_i$ and $M$. Prior to obtaining an overall optimization of Eq. (\ref{AB}), we will first consider the optimizations of $\cos\theta_{k}A(\alpha)$ and $\sin\theta_{k}B(\alpha)$. Since $\cos\theta_{k}$ and $\sin\theta_{k}$ are independent of $\alpha$, we can  focus only on the optimizations of $A(\alpha)$ and $B(\alpha)$.

\subsection{The optimization of $A(\alpha)$}
When $\alpha=\pi/2$, it is clear that $A(\alpha)$ reaches its maximum  $A_{max}=\max_\alpha(\sin^n(\frac{\pi}{4}+\frac{\alpha}{2}))=1$, and $B=\sqrt{\frac{M}{N}}$.
Since $N\gg M$, we have $B=\sqrt{\frac{M}{N}}\simeq 0$. When $\alpha$ is
away from $\pi/2$, $A(\alpha)$ will decay exponentially to zero.
Therefore, we can consider that $A(\alpha)$ has a non-zero value only when $\alpha$ is near $\pi/2$.

\subsection{The optimization of $B(\alpha)$}
Denote $D(\alpha)=\cos^{n-n_i}\frac{\alpha}{2}\sin^{n_i}\frac{\alpha}{2}$. We have $B(\alpha)=\frac{1}{\sqrt M}\max_\alpha(\sum_{i=1}^M D(\alpha))$.
The maximum of $D(\alpha)$ is
\begin{eqnarray}\label{Dmax}
D_{\max}=\left({\frac{n-n_i}{n}}\right)^{\frac{n-n_i}{2}}\left(\frac{n_i}{n}\right)^{\frac{n_i}{2}},
\end{eqnarray}
which is obtained when $\alpha=2\arccos\sqrt{\frac{n-n_i}{n}}$.
When $n$ is fixed, $D_{\max}$ decreases from $1$ to $1/\sqrt N$ as $n_i$ varies from $0$ to $n/2$. Since $M\ll N$, $n_i$ must be small. Therefore, the optimal $D_{max}$ is close to $1$, and $\alpha$ is near zero.

When taking $M$ marked states into consideration, the optimization will be more complicated.
For simplicity, we consider the same $n_i$, $i.e.$, the Hamming weights of all the marked states are the same. In this case, we have $M=(^n_{n_i})$ and $B=\sqrt M \cos^{n-n_i}\frac{\alpha}{2}\sin^{n_i}\frac{\alpha}{2}$. Therefore, the optimization of $B(\alpha)$ is
\begin{eqnarray}
B_{max}&&=\sqrt M \max_\alpha \left(\cos^{n-n_i}\frac{\alpha}{2}\sin^{n_i}\frac{\alpha}{2}\right)=\sqrt M D_{\max}.
\end{eqnarray}
When $n_i$ is small, the optimization of $B(\alpha)$ would make $A(\alpha)$ tend to $\frac{1}{\sqrt N}$, which is asymptotic to $0$ when $N\gg M$.

\vskip\baselineskip
According to the above discussion, we can summarize that the optimizations of $A(\alpha)$ and $B(\alpha)$ are restricted to each other when $N\gg M$.
Fig. \ref{fig:n100} shows that the values of $A(\alpha)$ and $B(\alpha)$ change with $\alpha$ for different $n_i$ when $n=100$.
Suppose that $g(\alpha)=A(\alpha)+B(\alpha)$. When $n_i$ is small ($n_i/n<1/10$ in Fig. \ref{fig:n100}), we can observe that $g(\alpha)$ is asymptotically a sectional function that is composed of $A(\alpha)$ and $B(\alpha)$. Namely,
\begin{eqnarray}\label{fs}
g(\alpha)\simeq
\left\{
\begin{aligned}
&A(\alpha),&&0\leq \alpha/2< \varepsilon;\\%\ or \  \varepsilon\leq \alpha/2 <0;\\
&B(\alpha),&&\frac{\pi}{4}-\varepsilon'\leq \alpha/2 <\frac{\pi}{4}+\varepsilon';\\
&0,&&otherwise,\\
\end{aligned}
\right.
\end{eqnarray}
where $\varepsilon,\varepsilon'\geq 0$ and $\varepsilon+\varepsilon'\leq\pi/4$. Clearly, the optimizations of $A(\alpha)$ and $B(\alpha)$ are restricted to each other. In other words, the optimization of $g(\alpha)$ can only be obtained when $A(\alpha)$ or $B(\alpha)$ is optimized.
Note that the optimization of $g(\alpha)$ cannot be achieved by optimizing $A(\alpha)$ or $B(\alpha)$ separately without the assumption that $N\gg M$.
\begin{figure}
\centerline{\includegraphics[width=0.8\textwidth]{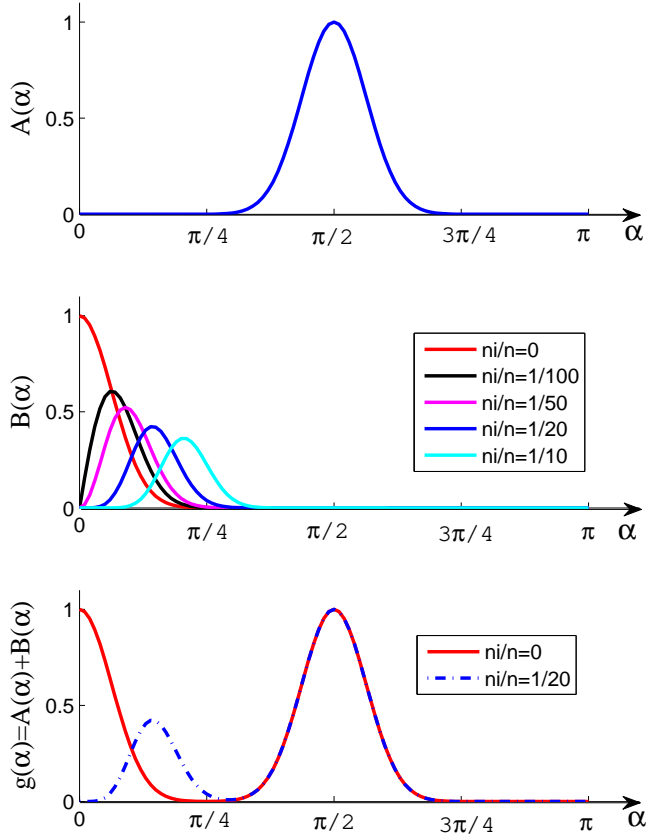}}
\caption{\label{fig:n100}(Color online) The values of $A(\alpha)=\sin^n({\frac{\pi}{4}+\frac{\alpha}{2}})$, $B(\alpha)=\sqrt M \cos^{n-n_i}\frac{\alpha}{2}\sin^{n_i}\frac{\alpha}{2}$ and $g(\alpha)=A(\alpha)+B(\alpha)$ change with $\alpha$ for $n=100$.}
\end{figure}

\subsection{The optimization of GME}
For simplicity, we would like to present the following lemmas.
\newtheorem{Lemma}{\textbf{Lemma}}\label{L1}
\begin{Lemma}
The maximum overlap of $\ket{\psi_{k,M}}$ and $\ket\phi$ is
$\max_\phi|\braket{\psi_{k,M}|\phi}|^2=\max(\cos^2\theta_k,\sin^2\theta_k B_{\max}^2)$ when $N\gg M$.
\end{Lemma}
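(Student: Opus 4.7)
The plan is to combine the asymptotic factorization $\Lambda\simeq\cos\theta_k A(\alpha)+\sin\theta_k B(\alpha)$ from Eq.~(\ref{Lambda}) with the sectional structure of $g(\alpha)=A(\alpha)+B(\alpha)$ established in Eq.~(\ref{fs}). Since $\theta_k\in[0,\pi/2]$, both $\cos\theta_k$ and $\sin\theta_k$ are non-negative, and in the regime $N\gg M$ both $A(\alpha)$ and $B(\alpha)$ are non-negative on the entire interval $\alpha\in[0,\pi]$. Hence $|\Lambda|$ is maximized by maximizing the real, non-negative sum $\cos\theta_k A(\alpha)+\sin\theta_k B(\alpha)$ over $\alpha$.

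The key observation is that the supports of $A$ and $B$ (where each is not asymptotically zero) are essentially disjoint intervals: $A(\alpha)$ is appreciably non-zero only in a small neighbourhood of $\alpha=\pi/2$, whereas by the discussion preceding the lemma, $B(\alpha)$ attains its peak near $\alpha=2\arccos\sqrt{(n-n_i)/n}$, which lies close to $0$ for small $n_i$ and in any case strictly away from $\pi/2$. I would first restrict the optimization to the neighbourhood of $\alpha=\pi/2$: there $B(\alpha)\leq \sqrt{M/N}\to 0$, so $\max\Lambda\simeq \cos\theta_k\cdot A_{\max}=\cos\theta_k$. Next I would restrict to the neighbourhood of the peak of $B$: there $A(\alpha)$ decays exponentially to zero (since $\alpha$ is bounded away from $\pi/2$), yielding $\max\Lambda\simeq \sin\theta_k\cdot B_{\max}$. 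Outside both neighbourhoods, Eq.~(\ref{fs}) gives $g(\alpha)\simeq 0$, so $\Lambda$ is negligible there.

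Taking the larger of the two local maxima yields
\begin{equation}
\max_\alpha\bigl(\cos\theta_k A(\alpha)+\sin\theta_k B(\alpha)\bigr)\simeq \max(\cos\theta_k,\,\sin\theta_k B_{\max}),\nonumber
\end{equation}
and squaring produces the claimed formula $\max_\phi|\braket{\psi_{k,M}|\phi}|^2=\max(\cos^2\theta_k,\sin^2\theta_k B_{\max}^2)$. The main obstacle is the rigorous justification that the two local maxima cannot be beaten by a point in an intermediate region where both $A$ and $B$ are simultaneously of moderate size. This is exactly what the sectional approximation~(\ref{fs}) rules out in the regime $N\gg M$: the cross-contribution $\cos\theta_k A(\alpha)+\sin\theta_k B(\alpha)$ at any intermediate $\alpha$ is dominated by the larger of $\cos\theta_k A_{\max}$ and $\sin\theta_k B_{\max}$, up to terms that vanish as $N/M\to\infty$. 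Once this separation of scales is invoked, the lemma follows directly.
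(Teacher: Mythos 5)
Your proposal is correct and follows essentially the same route as the paper: both arguments reduce the maximization of $|\Lambda|\simeq|\cos\theta_k A(\alpha)+\sin\theta_k B(\alpha)|$ to taking the larger of the two separately optimized terms, justified by the sectional (essentially disjoint-support) approximation of Eq.~(\ref{fs}). The only cosmetic difference is that the paper factors out $\cos\theta_k$ and works with $f(\alpha)=A(\alpha)+\tan\theta_k B(\alpha)$, whereas you keep the two coefficients explicit; your version is in fact slightly more explicit about why no intermediate $\alpha$ with both $A$ and $B$ of moderate size can beat the two local maxima, a point the paper passes over with ``it is easy to obtain.''
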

\begin{proof}
Let $f(\alpha)=A(\alpha)+\tan\theta_k B(\alpha)$.
We have $\Lambda=\cos\theta_k f(\alpha)$ and $\max_\alpha |\Lambda|= \cos\theta_k \max_\alpha |f(\alpha)$| as $\cos\theta_k\geq 0$.
According to Eq. (\ref{fs}), it is easy to obtain
$\max|f(\alpha)|={\max(\max |A(\alpha)|, \tan\theta_k\cdot\max |B(\alpha)|)}=\max(1,\tan\theta_k B_{\max})$ for $\theta_k\in [0,\pi/2]$ and $\tan\theta_k\geq 0$.
Therefore, we have $\max|\Lambda|=\max(\cos\theta_k,\sin\theta_k B_{\max})$.
Since $\max_\phi|\braket{\psi_{k,M}|\phi}|^2=\max|\Lambda|^2$, the lemma follows.
\end{proof}

\newtheorem{Lemma2}{\textbf{Lemma}}\label{L2}
\begin{Lemma}
A turning point $\theta_{k_T}$ exists in the GME curve during Grover's search algorithm when $N\gg M$, which can be computed by the number $n$ of qubits and the number of marked states and their Hamming weights in general cases.
\end{Lemma}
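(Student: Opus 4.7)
The plan is to invoke Lemma~1 and reduce the existence of a turning point to the elementary intersection of the two scalar functions $\cos^{2}\theta_{k}$ and $\sin^{2}\theta_{k}\,B_{\max}^{2}$ on the interval $k\in[0,k_{opt}]$. By Lemma~1 the GME reads $E(\ket{\psi_{k,M}})=1-\max(\cos^{2}\theta_{k},\sin^{2}\theta_{k}\,B_{\max}^{2})$, so I would \emph{define} the turning point $\theta_{k_{T}}$ to be the unique angle at which the two candidate expressions coincide; below it the first term is active (giving the scale-invariant regime $E\simeq\sin^{2}\theta_{k}$), above it the second takes over (giving $E\simeq 1-\sin^{2}\theta_{k}\,B_{\max}^{2}$, which depends explicitly on $n$, $M$ and the Hamming weights through $B_{\max}$).

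First I would verify that such a crossing actually lies in $(0,\pi/2)$. At $k=0$, $\theta_{0}=\arcsin\sqrt{M/N}\simeq 0$, so $\cos^{2}\theta_{0}\simeq 1$ strictly dominates $\sin^{2}\theta_{0}\,B_{\max}^{2}\simeq 0$. At $k=k_{opt}$, $\theta_{k_{opt}}=\pi/2$, so $\cos^{2}\theta_{k_{opt}}=0$ while $\sin^{2}\theta_{k_{opt}}\,B_{\max}^{2}=B_{\max}^{2}>0$, strict positivity holding because at least one summand $\cos^{n-n_{i}}(\alpha/2)\sin^{n_{i}}(\alpha/2)$ in the definition of $B(\alpha)$ is positive for $\alpha\in(0,\pi)$. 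Continuity together with the strict monotonicity of $\cos^{2}\theta$ (decreasing) and $\sin^{2}\theta$ (increasing) on $[0,\pi/2]$ then delivers existence and uniqueness of the crossing by the intermediate value theorem.

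Next I would solve the defining equation $\cos^{2}\theta_{k_{T}}=\sin^{2}\theta_{k_{T}}\,B_{\max}^{2}$, obtaining $\tan\theta_{k_{T}}=1/B_{\max}$, hence $\theta_{k_{T}}=\arctan(1/B_{\max})$ and $k_{T}\simeq(2/\pi)\,k_{opt}\arctan(1/B_{\max})$. Since
\[
B_{\max}=\max_{\alpha}\frac{1}{\sqrt{M}}\sum_{i=1}^{M}\cos^{n-n_{i}}\frac{\alpha}{2}\sin^{n_{i}}\frac{\alpha}{2}
\]
is a function solely of $n$, $M$ and the Hamming weights $n_{1},\ldots,n_{M}$, this realizes the computability claim of the lemma.

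The main obstacle is evaluating $B_{\max}$ in closed form for general marked-state configurations: the critical-point equation for $\sum_{i}\cos^{n-n_{i}}(\alpha/2)\sin^{n_{i}}(\alpha/2)$ is transcendental whenever the Hamming weights differ, so an explicit analytic expression for $\theta_{k_{T}}$ is only available in symmetric situations, for example when all $n_{i}$ coincide and $B_{\max}=\sqrt{M}\,D_{\max}$ with $D_{\max}$ as in Eq.~(\ref{Dmax}). For the existence-and-computability statement of the lemma itself, however, this is not an impediment: it suffices that $B_{\max}$ is a well-defined finite quantity determined by the marked-state data, which follows from compactness of $[0,\pi]$ and continuity of $B(\alpha)$.
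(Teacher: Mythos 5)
Your proposal is correct and takes essentially the same route as the paper: both invoke Lemma~1, locate the turning point at the crossing $\cos\theta_{k_T}=\sin\theta_{k_T}B_{\max}$, and conclude $\theta_{k_T}=\arctan(1/B_{\max})$, which depends on $n$, $M$ and the Hamming weights only through $B_{\max}$. Your explicit endpoint and intermediate-value check that the crossing lies in $(0,\pi/2)$ is a minor tightening of what the paper leaves implicit (it argues instead via the monotonicity of the two branches of the GME), but the substance is identical.
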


\begin{proof}
According to Lemma $1$, $\max|\Lambda|=\max(\cos\theta_k,\sin\theta_k B_{max})$. If $\tan \theta_k\leq 1/B_{max} $, then $\max|\Lambda|=\cos\theta_k$. Otherwise, $\max |\Lambda|=  \sin\theta_k B_{max}$.

Case $1$: If $\tan \theta_k\leq 1/B_{max} $, then $\max|\Lambda|=\cos\theta_k$. The GME is
\begin{equation}
E(\ket{\psi_{k,M}})\simeq \sin^2\theta_k,
\end{equation}
which is actually equal to the success probability of the search algorithm. Since $\sin^2\theta_k$ is increasing monotonously for $\theta_k\in[0,\pi/2]$, $E(\ket{\psi_{k,M}})$ is a monotonic increasing function.

Case $2$: If $\tan \theta_k>1/B_{max} $, we have $\max|\Lambda|=\sin\theta_k B_{max}$. The maximum of the overlap is
\begin{eqnarray}\label{partB}
\max_\phi|\braket{\psi_{k,M}|\phi}|^2
&&\simeq \sin^2\theta_k \max_\alpha |B|^2\nonumber\\
&&=\frac{\sin^2\theta_k}{M}\max_\alpha\left|\sum_{i=1}^M \cos^{n-n_i}\frac{\alpha}{2}\sin^{n_i}
\frac{\alpha}{2}\right|^2.
\end{eqnarray}
The maximum depends not only on $\theta_k$ but also on $M, n$ and $n_i$. The optimization $$\max_\alpha|B|^2=\frac{1}{M}\max_\alpha\left|\sum_{i=1}^M \cos^{n-n_i}\frac{\alpha}{2}\sin^{n_i}\frac{\alpha}{2}\right|^2$$
is fixed to be constant only in the case of $M,n,n_i$ being fixed. In such a case, Eq. (\ref{partB}) is proportional to $\sin^2\theta_k$, which is increasing monotonously as $\theta_k\in[0,\pi/2]$. Therefore, the GME is decreasing monotonously in this case.

According to the above discussion, we know that a turning point exists, which is the position of the maximum entanglement during the search algorithm. Hence, we have $\max|\Lambda|=\cos\theta_{k_T}=\sin\theta_{k_T} B_{\max}$ at the turning point.
Therefore, the turning point is
\begin{equation}\label{thk}
\theta_{k_T}=\arctan\left(\frac{1}{B_{max}}\right)=\arctan\left(\frac{\sqrt{M}}{\max_\alpha
\sum_{i=1}^M \cos^{n-n_i}\frac{\alpha}{2}\sin^{n_i}\frac{\alpha}{2}}\right),
\end{equation}
where $k_T=\frac{2}{\pi}\theta_{k_T}k_{opt}$ is the corresponding turning iteration.
According to the expression of $\theta_{k_T}$, we can obtain that the turning point typically depends on the number $n$ of qubits and the marked states (the number $M$ of marked states and their Hamming weights $n_i,i=1,2,...,M$).
\end{proof}

Furthermore, we can obtain the following theorem as the main result.
\newtheorem{theorem1}{\textbf{Theorem}}\label{Theorem1}
\begin{theorem}
For a given $N=2^n$ item database with $M$ marked states {\em (}{\em i.e.,} solutions{\em )}, let the global GME of  the state $\ket{\psi_{k,M}}$ after $k$ iterations in Grover's algorithm be $E(\ket{\psi_{k,M}})$. When $N\gg M$, the GME is
\begin{eqnarray}\label{Econ0}
E(\ket{\psi_{k,M}})\simeq
\left\{
\begin{aligned}
&\sin^2\theta_k,&\theta_k\leq\theta_{k_T};\nonumber\\
&1-\frac{\sin^2\theta_k}{M}\max_{\alpha\in[0,\pi]}\left|\sum_{i=1}^M
\cos^{n-n_i}\frac{\alpha}{2}\sin^{n_i}\frac{\alpha}{2}\right|^2,&\theta_k > \theta_{k_T},
\end{aligned}
\right.
\end{eqnarray}
where $n_i$ is the Hamming weight of the $i$th marked state, $\theta_k=(2k+1)\theta$,
$\theta_{k_T}$ is named the turning point, and $k_T$ is the corresponding turning iteration.
\end{theorem}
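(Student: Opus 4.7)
The plan is to derive Theorem 1 almost directly from Lemma 1 and Lemma 2, since most of the analytical work has been absorbed into those two results. What remains is to unpack the piecewise maximum supplied by Lemma 1, align it with the turning condition characterized in Lemma 2, and translate the result back to the definition $E(\ket{\psi_{k,M}}) = 1 - \max_\phi |\braket{\psi_{k,M}|\phi}|^2$ from Eq.~(\ref{defEn}).

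First I would invoke Lemma 1 to write, in the regime $N \gg M$,
\begin{equation}
\max_\phi|\braket{\psi_{k,M}|\phi}|^2 \simeq \max\bigl(\cos^2\theta_k,\;\sin^2\theta_k\, B_{\max}^2\bigr).
\end{equation}
I would then split into two cases based on which branch of this max is active, which is exactly the comparison driving Lemma 2: for $\theta_k\in[0,\pi/2]$, the inequality $\cos^2\theta_k \geq \sin^2\theta_k B_{\max}^2$ is equivalent to $\tan\theta_k \leq 1/B_{\max}$, i.e.\ to $\theta_k \leq \theta_{k_T}$ with $\theta_{k_T} = \arctan(1/B_{\max})$ as identified in Eq.~(\ref{thk}).

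In the first case ($\theta_k \leq \theta_{k_T}$), substituting $\max|\Lambda|^2 = \cos^2\theta_k$ into $E = 1 - \max|\Lambda|^2$ immediately yields the first branch $E(\ket{\psi_{k,M}}) \simeq \sin^2\theta_k$. In the second case ($\theta_k > \theta_{k_T}$), one has $\max|\Lambda|^2 = \sin^2\theta_k\, B_{\max}^2$, and substituting the defining expression $B_{\max}^2 = (1/M)\max_\alpha\bigl|\sum_{i=1}^M \cos^{n-n_i}(\alpha/2)\sin^{n_i}(\alpha/2)\bigr|^2$ reproduces the second branch of the claim verbatim.

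The only real obstacle is upstream, and it is already handled implicitly in the analysis preceding Lemma 1: one must justify that for $N \gg M$ the maximization of $f(\alpha) = A(\alpha) + \tan\theta_k\, B(\alpha)$ over $\alpha$ decouples into $\max(\max_\alpha A(\alpha),\ \tan\theta_k\,\max_\alpha B(\alpha))$. This is the content of the piecewise approximation in Eq.~(\ref{fs}), which says that $A(\alpha)$ is supported essentially only near $\alpha=\pi/2$ while each term of $B(\alpha)$ is concentrated near $\alpha = 2\arccos\sqrt{(n-n_i)/n}$, so the two bumps cannot interfere constructively in the $N\gg M$ regime. If I were going to tighten the proof beyond what the excerpt does, the step I would focus on is a quantitative error bound on this decoupling in powers of $M/N$, so that the $\simeq$ symbol in the theorem carries a controlled meaning; given that estimate, the case split above yields the theorem in a line.
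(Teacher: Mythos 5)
Your proposal follows essentially the same route as the paper, whose proof of Theorem 1 is a one-line appeal to Eq.~(\ref{defEn}), Lemma 1 and Lemma 2; you simply unpack that appeal into the explicit case split at $\theta_{k_T}=\arctan(1/B_{\max})$ and the substitution of $B_{\max}^2$. Your closing remark correctly identifies that the genuine analytical burden lies upstream in the decoupling approximation of Eq.~(\ref{fs}) used in Lemma 1, which the paper likewise leaves without a quantitative error bound.
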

\begin{proof}
 According to Eq. (\ref{defEn}), Lemma $1$ and Lemma $2$, the theorem follows.
\end{proof}

According to Eq. (\ref{thk}) and Theorem $1$, we can clearly observe that the first part of the GME of state $\ket{\psi_{k,M}}$ has scale invariance because it is asymptotic to $\sin^2\theta_k$, which only depends on $k/k_{opt}$. However, during the entire process, it generally does not have scale invariance because it  depends not only on $k/k_{opt}$ but also on $n,n_i$ and $M$ as $\theta_k>\theta_{k_T}$.
Note that the Hamming weights $n_i$ of the marked states play an important role in the calculation of the GME.\\

Furthermore, we can obtain the following corollaries.
\newtheorem{corollary1}{\textbf{Corollary}}\label{Cor1}
\begin{corollary}
The entanglement of Grover's algorithm is scale invariant if and only if the optimization of $B(\alpha)=\frac{1}{\sqrt M}\sum_{i=1}^M \cos^{n-n_i}\frac{\alpha}{2}\sin^{n_i}\frac{\alpha}{2}$ is independent of separate $M, n, n_i$.
\end{corollary}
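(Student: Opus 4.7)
The plan is to read the corollary as a direct consequence of Theorem 1 by carefully tracing where the parameters $n$, $M$, and $n_i$ enter the two branches of the GME curve. Since $\theta_k = (2k+1)\theta \simeq \frac{\pi}{2} k/k_{opt}$ is by construction a function of only the ratio $k/k_{opt}$, any quantity depending solely on $\theta_k$ is automatically scale invariant. In particular, the pre-turning-point branch $E \simeq \sin^2\theta_k$ is always scale invariant, so the entire content of the corollary concerns the post-turning-point branch
\[
E(\ket{\psi_{k,M}}) \simeq 1 - \sin^2\theta_k \cdot B_{max}^2,
\]
together with the turning point $\theta_{k_T} = \arctan(1/B_{max})$, both of which depend on the parameters $n$, $M$, $n_i$ only through the single number $B_{max}^2 = \frac{1}{M}\max_\alpha \left|\sum_{i=1}^M \cos^{n-n_i}\frac{\alpha}{2}\sin^{n_i}\frac{\alpha}{2}\right|^2$.

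For the sufficiency direction I would assume that $B_{max}^2$ is a universal constant $c$ not depending on the individual values of $n$, $M$, $n_i$. Then the post-turning-point expression reduces to $1 - c\sin^2\theta_k$, which is a function of $\theta_k$ alone, and the turning point $\theta_{k_T} = \arctan(1/\sqrt{c})$ is a single fixed angle. Joining the two branches yields a GME curve depending only on $k/k_{opt}$, i.e., scale invariant. For the necessity direction I would argue by contrapositive: if $B_{max}^2$ takes distinct values $b_1 \neq b_2$ for two admissible tuples $(n, M, n_i)$ and $(n', M', n_i')$, then at any $\theta_k$ past both turning points Theorem 1 gives
\[
E - E' \simeq (b_2 - b_1)\sin^2\theta_k \neq 0,
\]
so two parameter choices with the same value of $k/k_{opt}$ produce different entanglement values, contradicting scale invariance.

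The main obstacle is not computational but definitional: the corollary tacitly requires a precise notion of scale invariance, which I would fix at the outset as ``the map $\theta_k \mapsto E(\ket{\psi_{k,M}})$ is the same for every admissible configuration $(n, M, n_i)$ with $N \gg M$.'' Once this convention is in place, both implications are immediate from Theorem 1 together with the strict monotonicity of $\sin^2\theta_k$ on $[0,\pi/2]$, so no further technical machinery is needed beyond what is already established in Lemmas $1$ and $2$.
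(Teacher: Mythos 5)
Your proposal is correct and follows essentially the same route as the paper: both reduce the question to the observation that, by Theorem 1, the post-turning-point branch and the turning point itself depend on $(n, M, n_i)$ only through $B_{max}$, so constancy of $B_{max}$ is equivalent to the whole curve being a function of $\theta_k$ alone. Your treatment is in fact slightly tighter than the paper's --- you state the definition of scale invariance explicitly, argue necessity by contrapositive rather than by assertion, and correctly write the second branch with $B_{max}^2$ where the paper's proof has a small typo ($1-\lambda\sin^2\theta_k$ instead of $1-\lambda^2\sin^2\theta_k$) --- but these are refinements of the same argument, not a different one.
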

\begin{proof}
$(\Rightarrow)$ If the optimization of $B(\alpha)=\frac{1}{\sqrt M}\sum_{i=1}^M \cos^{n-n_i}\frac{\alpha}{2}\sin^{n_i}\frac{\alpha}{2}$ is independent of separate $M, n, n_i$, then $B_{max}=\lambda$, where $\lambda$ is a constant.
Thus, $E(\ket{\psi_{k,M}})\simeq 1-\lambda\sin^2\theta_k$ as $\theta_k > \theta_{k_T}$. Therefore, the entanglement only depends on $\theta_k$, $i.e.$, only depends on $k/k_{opt}$, which is scale invariant. \\
$(\Leftarrow)$ If the entanglement of Grover's algorithm is scale invariant, then both parts of the GME only depend on $\theta_k$. This means that the optimization of $B(\alpha)=\frac{1}{\sqrt M}\sum_{i=1}^M \cos^{n-n_i}\frac{\alpha}{2}\sin^{n_i}\frac{\alpha}{2}$ is constant, which is independent of separate $M, n, n_i$.
\end{proof}

Consider the case where the Hamming weights $n_i$ of the $M$ marked states are the same; in this case, we will have the following:
\newtheorem{corollary2}{\textbf{Corollary}}\label{Cor2}
\begin{corollary}
Suppose that the Hamming weights of all the marked states are identical; the GME of the state
$\ket{\psi_{k,M}}$ in Grover's algorithm is
\begin{eqnarray}\label{Econ}
E(\ket{\psi_{k,M}})\simeq
\left\{
\begin{aligned}
&\sin^2\theta_k,&\theta_k\leq\theta_{k_T};\nonumber\\
&1-\sin^2\theta_k M \left({1-\frac{n_i}{n}}\right)^{n-n_i}\left(\frac{n_i}{n}\right)^{n_i},&\theta_k > \theta_{k_T}.
\end{aligned}
\right.
\end{eqnarray}
\end{corollary}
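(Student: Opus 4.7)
The plan is to derive Corollary 2 as a direct specialization of Theorem 1, using the fact that when all Hamming weights coincide, the sum $\sum_{i=1}^M\cos^{n-n_i}\tfrac{\alpha}{2}\sin^{n_i}\tfrac{\alpha}{2}$ collapses to $M$ copies of the same expression, whose maximum over $\alpha$ is already known from Eq. (\ref{Dmax}). So I do not need to reopen the overlap optimization from scratch: Theorem 1 already reduces everything to computing $B_{\max}$ in this symmetric case.

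First I would handle the regime $\theta_k\le\theta_{k_T}$. Theorem 1 gives $E(\ket{\psi_{k,M}})\simeq\sin^2\theta_k$ unconditionally, independent of how the marked states are distributed, so the first branch of (\ref{Econ}) is immediate.

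Next, for $\theta_k>\theta_{k_T}$, I substitute the common value $n_i$ into the sum defining $B(\alpha)$. Since every term is identical,
\begin{equation*}
\sum_{i=1}^M\cos^{n-n_i}\tfrac{\alpha}{2}\sin^{n_i}\tfrac{\alpha}{2}=M\cos^{n-n_i}\tfrac{\alpha}{2}\sin^{n_i}\tfrac{\alpha}{2}=M\,D(\alpha),
\end{equation*}
so the relevant quantity in Theorem 1 becomes $\tfrac{\sin^2\theta_k}{M}\max_\alpha|M D(\alpha)|^2=M\sin^2\theta_k\,D_{\max}^2$. Applying Eq. (\ref{Dmax}) gives $D_{\max}^2=\bigl(1-\tfrac{n_i}{n}\bigr)^{n-n_i}\bigl(\tfrac{n_i}{n}\bigr)^{n_i}$, and plugging into Theorem 1 yields exactly the second branch of (\ref{Econ}).

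There is essentially no obstacle here, since each step is a routine substitution. The only small point worth checking is that the formula (\ref{Dmax}) for $D_{\max}$ was derived for a single term $\cos^{n-n_i}\tfrac{\alpha}{2}\sin^{n_i}\tfrac{\alpha}{2}$ and is attained at a common value of $\alpha$; this is precisely why the collapse works in the identical-Hamming-weight case, and would fail for mixed $n_i$ (where different terms in the sum would be maximized at different $\alpha$). Thus the corollary is the cleanest instance of Theorem 1, and the proof is simply: apply Theorem 1, collapse the sum, and insert Eq. (\ref{Dmax}).
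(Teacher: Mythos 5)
Your proposal is correct and follows essentially the same route as the paper: collapse the sum to $M\cos^{n-n_i}\tfrac{\alpha}{2}\sin^{n_i}\tfrac{\alpha}{2}$, invoke Eq.~(\ref{Dmax}) for the maximum, and substitute into Theorem 1. The arithmetic $\tfrac{1}{M}\max_\alpha|M D(\alpha)|^2 = M D_{\max}^2$ matches the stated second branch, and your side remark about why the collapse needs identical Hamming weights is consistent with the paper's discussion.
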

\begin{proof}
When the Hamming weights of all the marked states are identical, {\em i.e.}, all $n_i$ are the same, we have $\sum_{i=1}^M \cos^{n-n_i}\frac{\alpha}{2}\sin^{n_i}\frac{\alpha}{2}
= M \cos^{n-n_i}\frac{\alpha}{2}\sin^{n_i}\frac{\alpha}{2}$.
According to Eq. (\ref{Dmax}), it is clear that $\max_\alpha|\cos^{n-n_i}\frac{\alpha}{2}\sin^{n_i}\frac{\alpha}{2}|^2=
({\frac{n-n_i}{n}})^{n-n_i}(\frac{n_i}{n})^{n_i}$ for $\alpha\in [0,\pi]$.
According to Theorem $1$, the corollary  follows.
\end{proof}

In this case, the turning point is
\begin{equation}\label{thk2}
\theta_{k_T}=\arctan\left(\frac{(1-{\frac{n_i}{n}})^{\frac{n_i-n}{2}}(\frac{n_i}{n})^{\frac{-n_i}{2}}}{\sqrt{M}}\right).
\end{equation}

\section{ Entanglement dynamics for representative symmetric states}\label{Sec4}
Let us assume that there are $M$ marked states in the system and that $\ket{S_{1}}=\frac{1}{\sqrt{M}}\sum_{x_{s}}\ket{x_{s}}$, where $|x_{s}\rangle$ is a marked state.  According to the above results,  if $|S_1\rangle$ is a symmetric state, then we are able to calculate the global multipartite dynamics during Grover's algorithm via the GME.
To further confirm that the optimization of the GME is reasonable and to show the above conclusions more clearly, we consider that $|S_1\rangle$ are some specific representative symmetric states. We will present the GMEs of the states $\ket{\psi_{k,M}}$, both analytic expressions in large $N$ and some numerical results.

\subsection{Product state}
Suppose that the state $\ket{S_{1}}$ is a product state ({\em i.e.}, fully $n$-separable state). Because local unitary operations do not change the entanglement in terms of GME \cite{PRA68}, product states can be changed into each other by applying local unitary operations that would not change the entanglement.
Without loss of generality, we can consider that $\ket{S_{1}}=\ket{00...0}$. The result can be generalized to other product marked states (irrespective of whether they are symmetric).
In this case, the marked state is $\ket{00...0}$, and the number of marked states is $M=1$.
Since the initial state is uniform, which is also a product state, the entanglement is zero both at the beginning and at the end of the search algorithm.
In this case, $B(\alpha)=\cos^n\frac{\alpha}{2}$, and the overlap of the state $\ket{\psi_{k,M=1}}$ and the
separable state $\ket{\phi}$ is
\begin{eqnarray}\label{M1lap}
\braket{\psi_{k,M=1}|\phi}=
\frac{\cos\theta_k}{\sqrt{N-1}}\left[\left(\cos{\frac{\alpha}{2}}+\sin{\frac{\alpha}{2}}\right)^n
 -\cos^n\frac{\alpha}{2}\right]
 +\sin\theta_k\cos^n\frac{\alpha}{2}.
\end{eqnarray}

Because $\max_\alpha(\cos^n\frac{\alpha}{2})=\max_\alpha(\cos\frac{\alpha}{2})=1$ when $\alpha=0$, the optimization of $B(\alpha)$ is independent of $n$. According to Eq. (\ref{thk}), we have $\theta_{k_T}=\pi/4$ and $k_T=0.5k_{opt}$.
Therefore, when $n\gg1$, according to Theorem $1$, the GME has the following simple form:
\begin{eqnarray}\label{M1r}
E(\ket{\psi_{k,M=1}})\simeq
\left\{
\begin{aligned}
&\sin^2\theta_k,&\theta_k\leq\pi/4;\\
&\cos^2\theta_k,&\theta_k >\pi/4.
\end{aligned}
\right.
\end{eqnarray}

\begin{figure}
\includegraphics[width=0.8\textwidth]{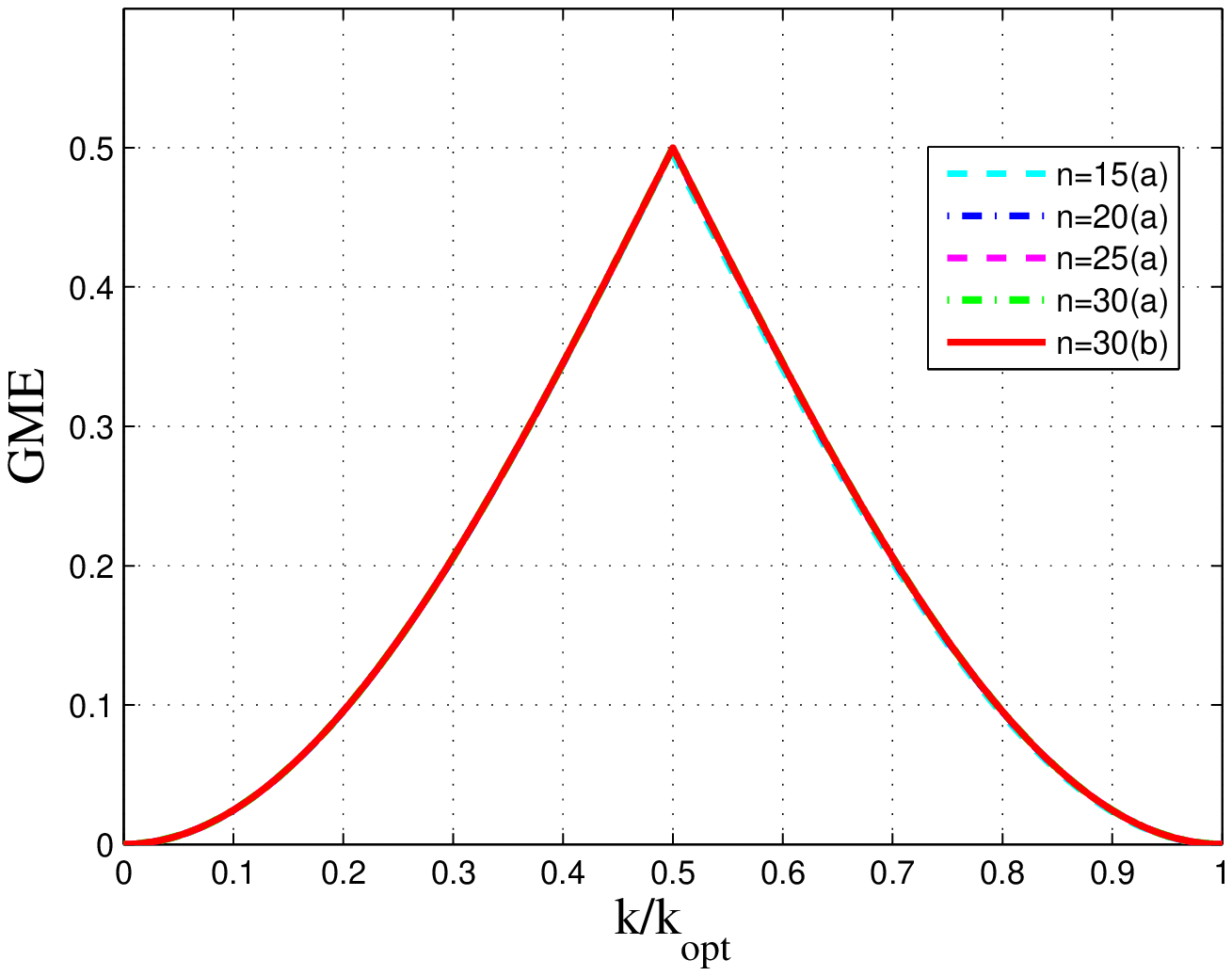}
\caption{\label{fig:M1}(Color online) Entanglement dynamics vs $k/k_{opt}$ when the marked states are $\ket{00...0}$ for $n=15\sim30$. Here, (a) means that the curves of GME  are plotted by Eq. (\ref{defEn}) and Eq. (\ref{M1lap}), and (b) is plotted by Eq. (\ref{M1r}).
}
\end{figure}

According to Eq. (\ref{M1r}), it is clear that the GME is scale invariant and only depends on
$\theta_k=\frac{\pi}{2}k/k_{opt}$.
For clarity, we present some numerical results in Fig. \ref{fig:M1}.
We can obtain that the curves only change with $k/k_{opt}$ but do not depend on $n$, showing scale invariance. Moreover, the curves of the GME plotted by Eq. (\ref{defEn}), Eq. (\ref{M1lap}) and Eq.( \ref{M1r}) are identical when $n=30$, which also shows that our method to optimize the GME is reasonable.

\newtheorem{Remark}{\textbf{Remark}}\label{R1}
\begin{Remark}
Since the optimization of $B(\alpha)$ is independent of $n$ for $n\gg1$, the entanglement depends only on $\theta_k=\frac{\pi}{2}k/k_{opt}$ and not on $n$. Namely, it is scale invariant, which has been discussed in \cite{PRA87}. The maximum entanglement asymptotically converges to $0.5$, which is obtained when $\theta_{k_T}=\pi/4$, $i.e.$, the iteration number is just $k_T/k_{opt}=0.5$.
\end{Remark}

\subsection{GHZ state}
A GHZ state \cite{07120921} is defined as
\begin{equation}
\ket{GHZ}=\frac{1}{\sqrt2}(\ket{00...0}+\ket{11..1}),
\end{equation}
which is a generalization of a Bell state.
If the state $\ket{S_1}$ is a GHZ state, then the number of marked states is $M=2$, and the marked states are $\ket{00...0}$ and $\ket{11...1}$. Thus, we have $B(\alpha)=\frac{1}{\sqrt 2} (\cos^{n}\frac{\alpha}{2}+\sin^n\frac{\alpha}{2})$.
After $k$ iterations, the state can be expressed as $\ket{\psi_{k,M=2}}$, and the overlap between this state and the separate state $\ket{\phi}$ is
\begin{eqnarray}\label{GHZlap}
\braket{\psi_{k,M=2}|\phi}=&&
\frac{\cos\theta_k}{\sqrt{N-2}}\left[\left(\cos{\frac{\alpha}{2}}+\sin{\frac{\alpha}{2}}\right)^n
-\left(\cos^n\frac{\alpha}{2}+\sin^n\frac{\alpha}{2}\right)\right]\nonumber\\
&&+\frac{\sin\theta_k}{\sqrt{2}}\left(\cos^{n}\frac{\alpha}{2}+\sin^n\frac{\alpha}{2}\right).
\end{eqnarray}

The optimization of $B=\frac{1}{\sqrt 2} (\cos^{n}\frac{\alpha}{2}+\sin^n\frac{\alpha}{2})$ is obtained when $\alpha=0$ (for $n\geq2$) or $\alpha=\pi/2$ (for $n=1$). Therefore, $B_{max}=1$, which is independent of $n$ obtained when $\alpha=0$ for $n\geq2$.
According to Eq. (\ref{thk}), the turning point is $\theta_{k_T}=\arctan\sqrt2$, and $k_T\simeq 0.61k_{opt}$.
When $n\gg1$, the GME of $\ket{\phi_{k,M=2}}$ can be expressed as follows:
\begin{eqnarray}
E(\ket{\psi_{k,M=2}})\simeq\label{GHZr}
\left\{
\begin{aligned}
&\sin^2\theta_k,& \theta_k\leq \arctan\sqrt2;\\
&\frac{1+\cos^2\theta_k}{2},&\theta_k > \arctan\sqrt2.
\end{aligned}
\right.
\end{eqnarray}

\begin{figure}
\includegraphics[width=0.8\textwidth]{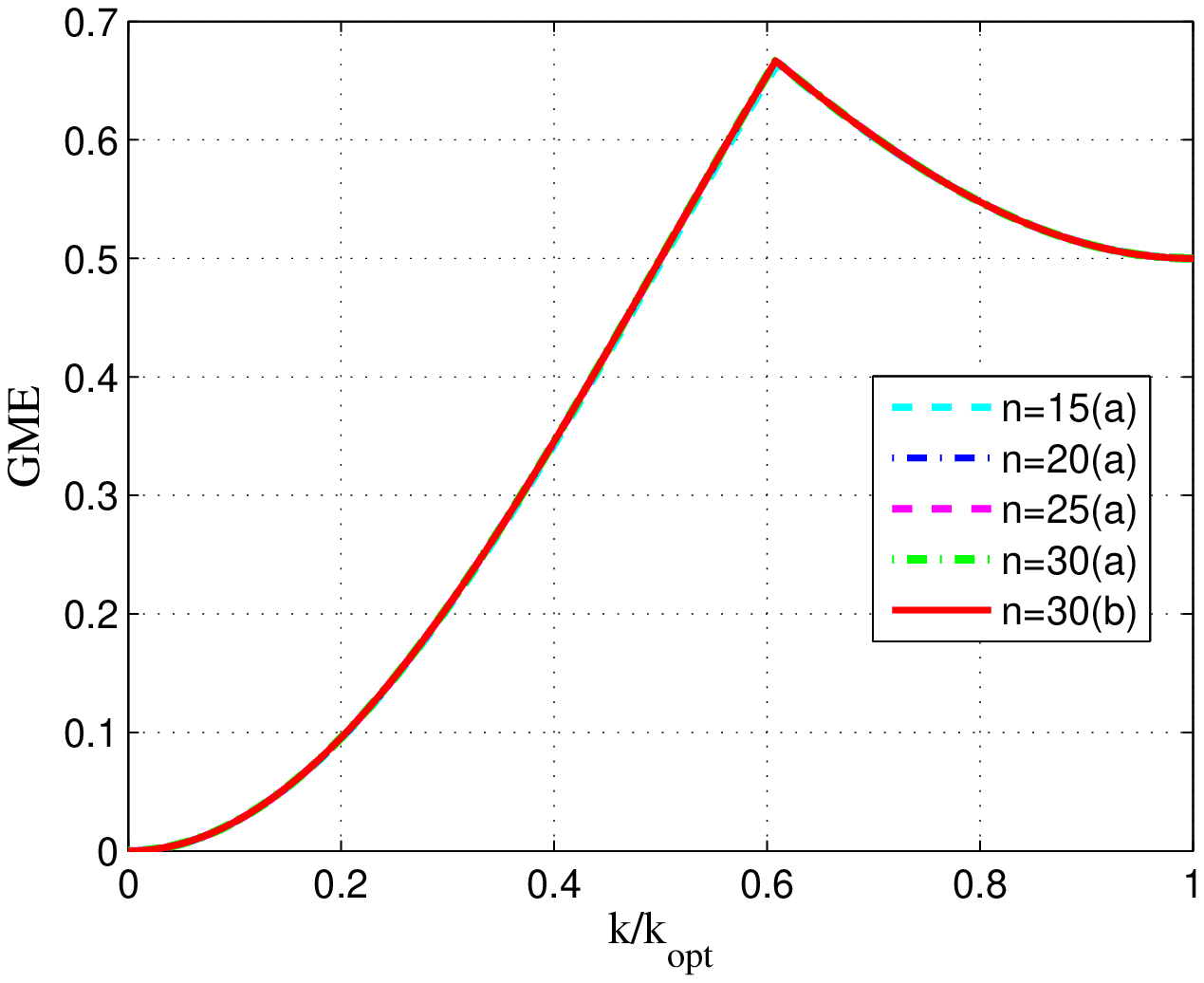}
\caption{\label{fig:GHZ} (Color online) Entanglement dynamics vs $k/k_{opt}$ when the marked states are $GHZ$ states for $n=15\sim30$. Here, (a) means that the curves of the GME are plotted by Eq. (\ref{defEn}) and Eq. (\ref{GHZlap}), and (b) is plotted by Eq. (\ref{GHZr}).}
\end{figure}

This expression shows that the GME is scale invariant and only depends on $\theta_k$, $i.e.$, just depends on $k/k_{opt}$ when $\ket{S_1}$ is a GHZ state.
To qualify the dynamics of entanglement change, we present the numerical results in Fig. \ref{fig:GHZ}. We can observe that the curves only change with $k/k_{opt}$ and do not depend on $n$, showing scale invariance. Furthermore, the curves of the GME plotted by Eq. (\ref{defEn}), Eq. (\ref{GHZlap}) and Eq. (\ref{GHZr}) are identical for $n=30$.

\newtheorem{Remark2}{\textbf{Remark}}\label{R2}
\begin{Remark}
In the case where $n\gg1$, according to Eq. (\ref{GHZr}), the GME also exhibits scale invariance, as discussed by Rossi {\em et al.} \cite{PRA87}. In this case, the optimization of $B(\alpha)=\frac{1}{\sqrt 2}(\cos^{n}\frac{\alpha}{2}+\sin^n\frac{\alpha}{2})$ is independent of $n$. The maximum entanglement asymptotically converges to $0.667$ when $\theta_{k_T}=\arctan\sqrt2$, i.e., the turning iteration is $k_T\simeq 0.61 k_{opt}$.
\end{Remark}

\subsection{$W$ state}
Dicke states \cite{Dicke}, which are also named the $W$ family, are an important family of symmetric states. A $W$ state \cite{PRA62} is a particular member of the family, which is defined as
\begin{equation}
\ket{W}=\frac{1}{\sqrt n}(\ket{00...01}+\ket{00...10}+...+\ket{10...00}).
\end{equation}
If the marked state $\ket{S_1}$ is a $W$ state, the number of marked states is $M=n$, and the Hamming weights of $M$ marked states are $1$ ({\em i.e.}, $n_i=1,\ i=1,2,...,M$). Thus, $B(\alpha)=\sqrt n \cos^{n-1}\frac{\alpha}{2}\sin\frac{\alpha}{2}$.
The state after $k$ iterations can be expressed as $\ket{\psi_{k,M=n}}$, and the overlap between this state and the separable state $\ket\phi$ can be expressed analytically as
\begin{eqnarray}\label{Wlap}
\braket{\psi_{k,M=n}|\phi}=&&
\frac{\cos\theta_k}{\sqrt{N-n}}\left[\left(\cos{\frac{\alpha}{2}}+\sin{\frac{\alpha}{2}}\right)^n
-n \cos^{n-1}\frac{\alpha}{2}\sin\frac{\alpha}{2}\right]\nonumber\\
&&+\sin\theta_k \left(\sqrt n \cos^{n-1}\frac{\alpha}{2}\sin\frac{\alpha}{2}\right).
\end{eqnarray}
The maximum of $|B(\alpha)|^2$ is
$$\max_\alpha|B(\alpha)|^2=n \max_\alpha\left|\cos^{n-1}\frac{\alpha}{2}\sin\frac{\alpha}{2}\right|^2
=\left(1-\frac{1}{n}\right)^{n-1},$$
which is obtained when $\alpha=2\arccos\sqrt{1-\frac{1}{n}}$.

 When $n\gg 1$, the entanglement can be expressed as
\begin{flalign}\label{Wr}
E(\ket{\psi_{k,M=n}})\simeq
\left\{
\begin{aligned}
&\sin^2\theta_k, &\theta_k\leq \theta_{k_T};\\
&1-\sin^2\theta_k\left(1-\frac{1}{n}\right)^{n-1}, &\theta_k > \theta_{k_T},
\end{aligned}
\right.
&
\end{flalign}
where the turning point is
\begin{eqnarray}
\theta_{k_T}=\arctan\left(\left(1-\frac{1}{n}\right)^\frac{1-n}{2}\right).
\end{eqnarray}

The GME is scale invariant when $\theta_k\leq\theta_{k_T}$. However, in addition to depending on $\theta_k$, the GME also depends on the number $n$ of qubits when $\theta_k > \theta_{k_T}$. Therefore, it does not have scale invariance in the entire search process.

We also present numerical results to qualify the dynamics of entanglement change for when $\ket{S_1}$ is a W state in Fig. \ref{fig:W}.
We can observe that the turning points  $k_T/k_{opt}=2\theta_{k_T}/\pi$ exist during the curves of the GME, which depend on $n$. The curves are first increasing and then decreasing. They are almost identical before the turning points, which only change with $k/k_{opt}$, showing scale invariance. However, the curves of the GME after the turning points not only change with $k/k_{opt}$ but also with $n$, which means that the GME does not have scale invariance. The result is consistent with our analytic result. Furthermore, the curves of the GME plotted by Eqs. (\ref{defEn}), (\ref{Wlap}) and (\ref{Wr}) are almost the same when $n=35$.

\newtheorem{Remark3}{\textbf{Remark}}\label{R3}
\begin{Remark}
 From the above discussions, the GME in Grover's search algorithm does not have scale invariance in the entire search process because it depends on $n$ after the turning point $k_T/k_{opt}$. Note that $\max_\alpha|B(\alpha)|^2\simeq 1/e$ in the limit $n\rightarrow\infty $. Hence, the position of the turning iteration is $k_T/k_{opt}\simeq0.653$, and the maximum entanglement is $E_{\max}=\sin^2\theta_{k_T}\simeq0.73$. The GME of the final state is $E_{opt}=E(W)=1-(\frac{n-1}{n})^{n-1}\simeq0.632$. In this respect, the GME is scale invariant in the limit $n\rightarrow\infty $.
\end{Remark}

\begin{figure}
\includegraphics[width=0.8\textwidth]{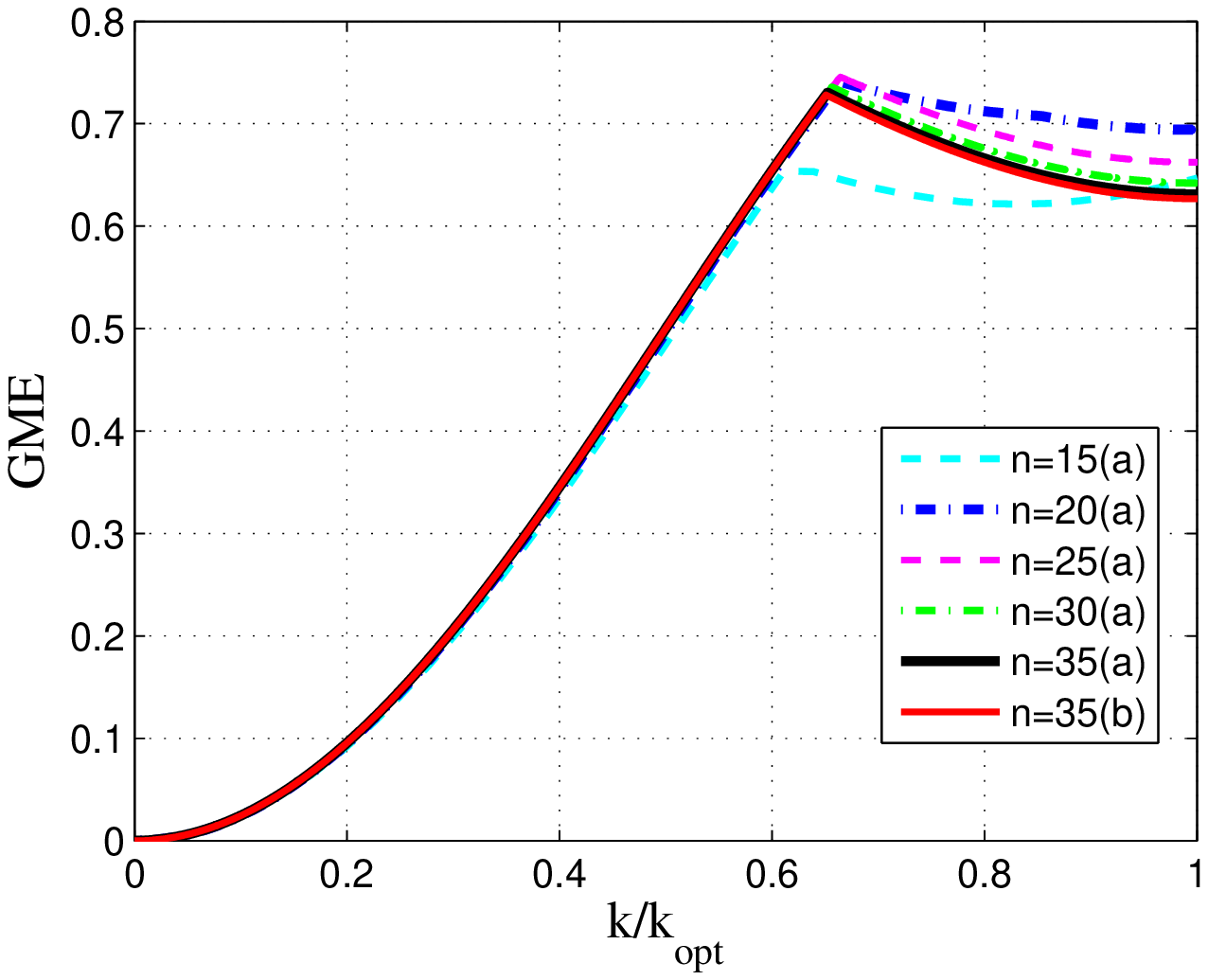}
\caption{\label{fig:W} (Color online) Entanglement dynamics vs $k/k_{opt}$ when the marked states are $W$ states for $n=15\sim35$. Here, (a) means that the curves of the GME are plotted by Eq. (\ref{defEn}) and Eq. (\ref{Wlap}), and (b) is plotted by Eq. (\ref{Wr}).
}
\end{figure}

\subsection{Discussion}
Until now, we have discussed the cases where $\ket{S_1}$ are product, GHZ and W states both analytically and numerically for different $n$. Comparing Eq. (\ref{M1r}), Eq. (\ref{GHZr}) and Eq. (\ref{Wr}), we find that $\theta_{k_T}$ are different. The GMEs are all asymptotic to $\sin^2\theta_k$ for $\theta_k\geq \theta_{k_T}$, but they are different for $\theta_k<\theta_{k_T}$, which shows that they depend on the marked states. We also present the GMEs of them when $n=28$ in Fig. \ref{fig:MGW}, where we can observe that the curves of the GMEs are the same before the turning point $k_T/k_{opt}$ but different after $k_T/k_{opt}$, which are decided by the marked states. The turning points are also different. Therefore, the result is consistent with the analytic result.

Two more simple examples are the GMEs of the states produced by Grover's search algorithm when $\ket{S_1}=1/\sqrt 2(\ket{0...00}+\ket{0...01})$ and $\ket{S_1}=1/\sqrt 2(\ket{0...00}+\ket{1...11})$ (GHZ state). Since $1/\sqrt 2(\ket{0...00}+\ket{0...01})$ is an asymmetric product state, the GME curve is the same as the case when $\ket{S_1}=\ket{0...00}$ according to section $4.1$. Thus, the turning point is $k_{opt}/2$, and the GME after turning is $cos^2 \theta_k$. Both of them are different from the case where $\ket{S_1}$ is a GHZ state. In these two cases, the difference comes from the Hamming weights of the marked states since the numbers of marked states are both $M=2$. Therefore, the Hamming weights of the marked states also play an important role in the calculation of the GME.

\begin{figure}
\includegraphics[width=0.8\textwidth]{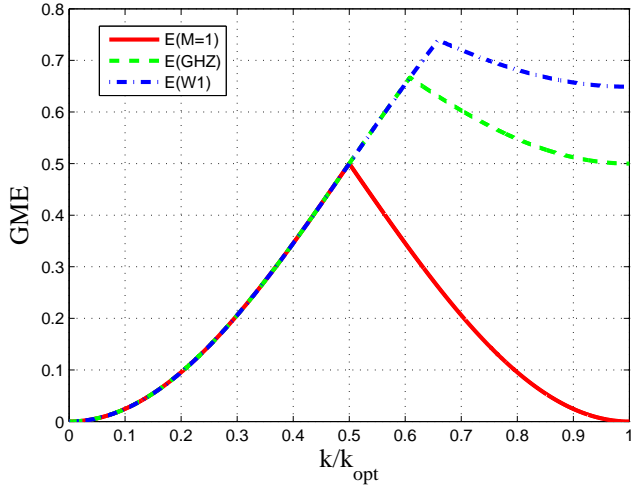}
\caption{\label{fig:MGW} (Color online) Entanglement dynamics vs $k/k_{opt}$ when $\ket{S_1}$ are product, GHZ and W states for $n=28$, which are plotted by Eq. (\ref{defEn}) with Eq. (\ref{M1lap}), Eq. (\ref{GHZlap}) and Eq. (\ref{Wlap}), respectively. %For different marked states, the curves have turning points $k_T/k_{opt}$ which are different. The GME are the same before $k_T/k_{opt}$, but are different after $k_T/k_{opt}$.
}
\end{figure}

\section{Conclusion} \label{Sec5}
Using the geometric measure of entanglement (GME), the amount of global multipartite entanglement after each iteration can be quantified in Grover's algorithm.
In this paper, we have considered $M$ marked symmetric states in a database of size $N=2^n$ when $N\gg M$. We first discussed the optimization process to effectively compute the GME. Then, we presented the GME expression for when the entanglement behaves asymptotically for large $N$. We have shown that  a turning point $\theta_{k_T}$ always exists in the GME curve and deduced a general formula to compute the turning point in terms of the number of the marked states and their Hamming weights. Before the turning point, the entanglement is always asymptotic to $\sin^2\theta_k$, which only depends on the ratio of the iteration number $k$ to the total iteration $k_{opt}$, $i.e.$, $k/k_{opt}$. However, the entanglement after the turning point often also depends on $n$ and the marked states (the number of the marked states and their Hamming weights). We also provide a sufficient and necessary condition when the GME is scale invariant.
To clearly illustrate the above conclusions, we presented both analytical expressions and numerical results when the marked states were product, GHZ and W states.

To summarize, we have answered the question in the introduction. ``Scale invariance" of the global multipartite entanglement dynamics holds only for some special cases when $N\gg M$. In general, the entanglement dynamics is not scale invariant in terms of the GME because it typically depends on the  number $n$ of qubits and the marked states during the process of the search algorithm. However, the GME is asymptotic to $\sin^2\theta_k$ before the turning point, which partially shows scale invariance.
In this paper, we only discuss the superposition state of $M$ marked states being symmetric. Do the conclusions hold in the case where the state is asymmetric? Can we obtain similar conclusions using other measurements of entanglement? These questions may be worth studying further.

\begin{acknowledgements}
We are thankful to the anonymous referees and editor for their comments and suggestions that have greatly helped to improve the quality of the manuscript. This work is supported in part by the National Natural Science Foundation of China (Nos. 61572532, 61272058, 61602532) and the Fundamental Research Funds for the Central Universities of China (Nos. 17lgjc24, 161gpy43).
\end{acknowledgements}

%\nocite{*}


\begin{thebibliography}{}
\bibitem{book1} Nielsen, M.A., Chuang, I.L.: Quantum computation and quantum information. Cambridge University Press. Cambridge (2000)
\bibitem{JMP43} Bru{\ss}, D.: Characterizing entanglement. J. Math. Phys. \textbf{43}(9), 4237-4251 (2002)
\bibitem{Shor} Shor, P.W.: Algorithms for quantum computer: discrete logarithm and factoring.
   in Proceedings of the 35th IEEE Symposium on the Foundations of Computer Science,
   IEEE Computer Society, Los Alamitos, CA, pp. 56-65 (1994)
\bibitem{Grover97} Grover, L.K.: A fast quantum mechanical algorithm for database search.Phys. Rev. Lett. \textbf {79}, 325 (1997). %Also in STOC'96, pp.212-219 (1996)
\bibitem{PRSLA459} Jozsa, R., Linden, N.: On the role of entanglement in quantum-computational speed-up. Proc. R. Soc. Lond. A \textbf{459}, 2011-2023 (2003)
\bibitem{PRL91} Vidal, G.: Efficient Classical Simulation of Slightly Entangled Quantum Computations, Phys. Rev. Lett. \textbf{91}, 147902 (2003)
\bibitem{PRSLA439} Deutsch, D., Jozsa, R.: Rapid solution of problems by quantum computation.
    Proc. R. Soc. Lond. A \textbf{439}, 553 (1992)
\bibitem{Simon94} Simon, D.: On the power of quantum computation. SIAM J. Comput. {\bfseries 26}, 1474--1483 (1997).  Earlier version in FOCS'94.
\bibitem{PRA83} Bru{\ss}, D., Macchiavello, C.: Multipartite entanglement in quantum algorithms. Phys. Rev. A \textbf{83}, 052313 (2011)
\bibitem{PRL85} Meyer, D.A.: Sophisticated Quantum Search Without Entanglement. Phys. Rev. Lett.
    \textbf{85}, 2014 (2000)
\bibitem{PRA65} Biham, O., Nielsen, M.A., Osborne, T.J.: Entanglement monotone derived from Grover¡¯s algorithm. Phys. Rev. A \textbf{65}, 062312 (2002)
\bibitem{PRA75} Shimoni, Y., Biham, O.: Groverian entanglement measure of pure quantum states with arbitrary partitions. Phys. Rev. A \textbf{75}, 022308 (2007)
\bibitem{IJQI6} Wen, J.Y., Qiu, D.W.: Entanglement in adiabatic quantum searching algorithms. Int. J. Quantum Inf. \textbf{6}(5), 997--1009 (2008)
\bibitem{JMP439} Meyer, D. A.,  Wallach, N. R. Global entanglement in multiparticle systems. Journal of Mathematical Physics, \textbf{43}(9), 4273-4278 (2002).
\bibitem{PRA69}Shimoni, Y., Shapira, D., Biham, O.: Characterization of pure quantum states of multiple qubits using the Groverian entanglement measure. Phys. Rev. A \textbf{69}, 062303 (2004)
\bibitem{PLA345} Fang, Y., Kaszlikowski, D., Chin, C., Tay, K., Kwek, L.C., Oh, C.H.: Entanglement in the Grover search algorithm. Phy. Lett. A \textbf{345}, 265-272 (2005).
\bibitem{PLA373} Rungta, P.: The quadratic speedup in Grover's search algorithm from the entanglement perspective. Phys. Let. A \textbf{373}, 2652-2659 (2009)
\bibitem{JPMT43} Cui, J., Fan, H.: Correlations in Grover search. Journal of Physics A Mathematical and Theoretical \textbf{43}(4), 045305 (2010).
\bibitem{PRA87} Rossi, M., Bru{\ss}, D., Macchiavello,C.: Scale invariance of entanglement dynamics     in Grover's quantum search algorithm, Phys. Rev. A \textbf{87}, 022331 (2013)
\bibitem{13054454} Chakraborty, S., Banerjee, S., Adhikari, S., Kumar, A.: Entanglement in the Grover's Search Algorithm. arXiv: 1305.4454 (2013)
\bibitem{Nat14} Qu, R., Shang, B.j., Bao, Y.R., Song, D.W., Teng, C.M., Zhou, Z.W.: Multipartite entanglement in Grover's search algorithm. Nat. Comput. \textbf{14}: 683-689 (2015).
\bibitem{QIP152} Batle, J., Raymond Ooi, C. H., Farouk, A., Alkhambash, M.S., Abdalla, S.: Global versus local quantum correlations in the Grover. Quantum Information Processing \textbf{15}(2), 833-850 (2016)
\bibitem{QIP1511}Holweck, F., Jaffali, H., Nounouh, I. Grover's algorithm and the secant varieties. Quantum Information Processing, \textbf{15}(11), 4391-4413 (2016).
\bibitem{ANYA755} Shimony, A.: Degree of entanglement.  Annals of the New York Academy of Sciences \textbf{755}, 675-679 (1995)
\bibitem{PRA68} Wei, T.C., Goldbart, P.M.: Geometric measure of entanglement and applications to bipartite and multipartite quantum states. Phys. Rev. A \textbf{68}, 042307 (2003)
\bibitem{PRA77} Blasone, M., Dell'Anno, F., Siena, S.D., Illuminati, F.: Hierarchies of geometric entanglement. Phys. Rev. A \textbf{77}, 062304 (2008).
\bibitem{PRA73} Cavalcanti, D.:Connecting the generalized robustness and the geometric measure of entanglement. Phys. Rev. A \textbf{73}, 044302 (2006)
\bibitem{PRL98} G¨¹hne, O., Reimpell, M., Werner, R.F.:Estimating Entanglement Measures in Experiments. Phys. Rev. Lett. \textbf{98}, 110502 (2007).
\bibitem{07120921} Greenberger, D.M., Horne, M.A., Zeilinger, A.: Going beyond Bell's theorem, arXiv: 0712.0921 (2007)
\bibitem{PRA81}Martin J. , Giraud O., Braun, P. A., Braun, D., Bastin T.: Multiqubit symmetric states with high geometric entanglement.Phys. Rev. A \textbf{81}, 062347 (2010)
\bibitem{PRA80}H$\ddot{u}$bener, R., Kleinmann, M., Wei, T.C., Gonz$\acute{a}$lez-Guill$\acute{e}$n, C., G$\ddot{u}$hne, O.: Geometric measure of entanglement for symmetric states.
    Phys. Rev. A \textbf{80}, 032324 (2009)
\bibitem{PRA67} Stockton, J.K., Geremia, J.M., Doherty, A.C., Mabuchi, H.: Characterizing the entanglement of symmetric many-particle spin-$\frac{1}{2}$ systems. Phys. Rev. A \textbf{67}, 022112 (2003)
\bibitem{Dicke} Dicke, R.H.: Coherence in Spontaneous Radiation Processes. Phys. Rev. \textbf{93}, 99 (1954)
\bibitem{PRA62} Dur, W., Vidal, G., Cirac, J.I.: Three qubits can be entangled in two inequivalent ways. Phys. Rev. A \textbf{62}, 062314 (2000)
\bibitem{PLA346} Chamoli, A., Bhandari, C.M.: Success rate and entanglement measure in Grover's search algorithm for certain kinds of four qubit states. Phys. Lett. A \textbf{346}, 17-26 (2005)

\end{thebibliography}
\end{document}